\def\newtheorems{\newtheorem{theorem}{Theorem}[section]

\newtheorem{lemma}[theorem]{Lemma}

\newtheorem{definition}[theorem]{Definition}}
\newcommand{\THEN}{{\sf then}}
\newcommand{\ELSE}{{\sf else}}
\newcommand{\elseif}{{\sf elseif}}
\newcommand{\Read}{\mbox{\it read}}
\newcommand{\IF}{{\sf if}}
\newcommand{\END}{{\sf end}}
\newcommand{\then}{\THEN}
\newcommand{\Readit}{\mbox{\it read}}
\newcommand{\Writeit}{\mbox{\it write}}
\newcommand{\Val}{\mbox{\it val}}
\newcommand{\Event}{\mbox{\it Event}}
\newcommand{\Data}{\mbox{\textit{Data}}}
\newcommand{\Writeon}{\mbox{Write-on-}}
\newcommand{\Readof}{\mbox{Read-of-}}
\newcommand{\Return}{\mbox{Return}}
\newcommand{\Assignmentto}{\mbox{Assign-to-}}
\newcommand{\StateVar}{\mbox{StateVar}}
\newcommand{\Type}{\mbox{Type}}
\newcommand{\calM}{\cal M}
\begin{document}
\title{Kishon's poker game}
\author{Uri Abraham\\ Ben-Gurion University of the Negev}

\maketitle

\begin{abstract}
We present an approach for proving the correctness of distributed algorithms 
that obviate interleaving of processes' actions. The main part of the 
correctness proof is conducted at a higher abstract level and uses Tarskian 
system executions that combine two separate issues: the specification of 
the serial process that executes  its protocol alone (no concurrency here),
 and the specification of the communication objects (no code here). 
In order to explain this approach a   short algorithm for two concurrent processes
 that we call ``Kishon's Poker''  is introduced 
and is used as a platform where this approach is compared to the standard  
one which is based on the notions of global state, step, and history.
\end{abstract}

\section{Introduction}

It is not easy to describe a new approach to an old problem when the meaning of the new terminology is similar 
but not exactly the same as in the established
publications. The longer way is to develop the required terminology 
and definitions in 
detail, but this takes time and the reader may lost patience before the essence of the new approach and its interest are
obtained.  
Aiming to explain our approach in a relatively short exposition, we shall work here with a simple example with which the differences
between the two terminologies and approaches can be highlighted.
We shall present a very simple algorithm for two concurrent processes,
 which we call ``Kishon's Poker Algorithm,''
and describe how two approaches handle its correctness proof:
 the standard\footnote{When we refer here to the standard approach 
it is just for convenience--we do not claim that the existing diversity can be reduced to a single one.} and 
our proposed one. While this doesn't do justice to neither approaches, it
gives a leisurely discussion and a fairly good idea on their different
 merits. 

It should be said already at the beginning that our aim is not
to propose a better approach, but rather to enlarge the existing range of tools by an addition which may be valuable in certain
circumstances.
 
\section{Kishon's algorithm}
\label{KA} The writer and humorist
Ephraim Kishon is no longer popular as he was in my childhood and probably most of you have not read his sketch 
``poker yehudi'' (Jewish Poker). Kishon meets his friend Arbinka who invented a game in which each of the two players has to think about a number
and the one who comes up with the greater number wins. At first, Kishon looses, but when he understands that he must bluff in order
to win he gets his revenge. Senseless as it is, the following algorithm  is motivated by this game. The two players are the processes
$p_0$ and $p_1$, and they run concurrently their protocols (Figure \ref{JP}) just once.

\begin{figure}
\begin{tabular}{|l|l|}\hline
\begin{minipage}[t]{72mm}
 $p_0$'s protocol\\

\begin{enumerate}
\item[$1_0$] $n_0 :=$ pick-a-number$(\,)$;
\item[$2_0$] $R_0 := n_0$;
\item[$3_0$] $v_0:= R_1$;
\item[$4_0$]  \IF\  $(v_0=0\vee v_0=n_0)$ \then\  
  $\> \Val_0 :=0$  \\
\elseif\ $v_0<n_0$ \then\ $val_0 :=1$ \\
  \ELSE\ \hspace{1.5mm} $val_0 := -1$;
\item[$5_0$] \END.
\end{enumerate}
 
\end{minipage}
&
\begin{minipage}[t]{72mm}

$p_1$'s protocol\\

\begin{enumerate}
\item[$1_1$] $n_1 :=$ pick-a-number$(\,)$;
\item[$2_1$] $R_1 := n_1$;
\item[$3_1$] $v_1:= R_0$;
\item[$4_1$]  \IF\ $(v_1=0\vee v_1=n_1)$ \then\ $val_1 :=0$\\
\elseif\ $v_1<n_1$ \then\ $val_1 :=1$ \\
  \ELSE\ $val_1 := -1$;
\item[$5_1$] \END.
\end{enumerate}

\end{minipage}\\
\hline
  \end{tabular}
\caption{The Kishon's Poker algorithm. Registers $R_0$ and $R_1$ are initially $0$. Procedure pick-a-number returns a natural
number $>0$.}
\label{JP}
\end{figure}

The processes communicate with two registers, $R_0$ and $R_1$, written by $p_0$ and $p_1$ respectively (and read by the other process).
These registers carry natural numbers and their initial value is $0$. We assume first, for simplicity, that the registers
are serial, and later on we shall deal with  regular registers (which are even more interesting from our point of view, 
see section \ref{S3.1}).

The local variables of process $p_i$ ($i=0,1$) are $n_i$ and $v_i$ which carry natural numbers and have initial value $0$, and the variable
$val_i$ which carries a value in $\{-1,0,1\}$ and is initially $0$.

We assume a procedure ``pick-a-number'' which returns a (randomly chosen) natural number greater than $0$.

If $E$ is an execution of the protocol by $p_i$ ($i=0,1$) then $E$ consists of four event, $E_1,\ldots,E_4$ which are executions of
the instructions at lines $1_i,\ldots,4_i$ of $p_i$'s protocol. Thus, for example, if $E$ is a protocol execution by $p_0$, then
$E_1$ is the event of invoking pick-a-number and assigning the returned non-zero natural number to variable $n_0$.
 $E_2$ is the write event on register $R_0$ that corresponds to line 2, $E_3$ is the read
event of register $R_1$ and the assignment of the value obtained to variable $v_0$.
  Finally, $E_4$ is the event of assigning a numeric value to variable $val_0$.
  
	If $x$ is any of the program variables or the register $R_i$ of $p_i$, then there is exactly one instruction that can change the initial value of $x$,
   and we denote with $x(E)$ the value of $x$ after it has been finally determined in $E$. For example, if $E$ is a protocol execution by $p_0$, then
   $v_0(E)$ is the value of variable $v_0$ after event $E_3$. That is, $v_0(E)$ is the value (assigned to $v_0$) obtained by reading register $R_1$.

   Suppose for example that $E$ is a protocol execution by $p_0$. Concerning $val_0(E)$ there are four distinct  possibilities:
\begin{enumerate}
 \item If $v_0(E)=0$, then $val_0(E)=0$. Otherwise, $v_0(E)$ is compared to $n_0(E)$ and we have the following possibilities:

     \item If $v_0(E)=n_0(E)$ then $val_0(E)=0$.
     \item  If $v_0(E)<n_0(E)$
then $val_0(E)=1$ (indicating that $p_0$ believes to win the Kishon's game),
\item but if $v_0(E)> n_0(E)$ then $p_0$ sets $val_0(E)=-1$ (admitting it losts the play).
\end{enumerate}

The idea in the way that $p_0$ determines $val_0$ is simple. In case $v_0(E)=0$, $p_0$ ``knows'' that it has read the
initial value of register $R_1$ and not the value of $p_1$'s pick-a-number execution (which must be $>0$). In this case, $p_0$
has no way of knowing whether it is a winner of the game or not, and it sets $val_0$ to the neutral value $0$. In the three remaining
cases, $p_0$ has in its hands the two pick-a-number values, of $p_1$ and its own, and can decide whether $val_0$ should be $+1$ or $-1$
(or $0$ in case of equality).

We assume that the Kishon's poker game is played just once, and our aim is to prove that 
the process that obtains the higher number in its pick-a-number wins the play (i.e. gets a higher $\Val_i$ value than the other player). We will prove the following theorem twice, and thus exemplify in a very simple setting two approaches to the problem of
proving the correctness of distributed algorithms.

\begin{theorem}
\label{T1}
Suppose $E$ and $F$ are concurrent protocol executions by $p_0$ and $p_1$ respectively of their algorithms. If
$n_0(E)<n_1(F)$, then $val_0(E)<val_1(F)$. And symmetrically if $n_1(F)<n_0(E)$ then $val_1(F)<val_0(E)$. If $n_0(E)=n_1(F)$ then $val_0(E)=val_1(F)=0$.
\end{theorem}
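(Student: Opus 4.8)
The plan is to collapse the concurrent behaviour into a finite case analysis driven by the values the two read events can return. Because register $R_1$ is serial, initialized to $0$, and written exactly once by $p_1$ (to $n_1(F) > 0$ at event $F_2$), the read event $E_3$ must return either the initial value $0$ or the written value; hence $v_0(E) \in \{0, n_1(F)\}$. Symmetrically $v_1(F) \in \{0, n_0(E)\}$. This restricts the joint outcome to four combinations, and the whole theorem will follow once we eliminate one of them and check the rest against the branch conditions of line $4_i$.

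The key lemma---and the step I expect to be the main obstacle---is that $v_0(E) = 0$ and $v_1(F) = 0$ cannot both hold. First I would make the serial-register semantics precise: on a serial register the read and write operations are linearly ordered, and a read returns the value of the most recent preceding write, or the initial value if none precedes it. Thus $v_0(E) = 0$ forces $E_3$ to precede $F_2$ in $R_1$'s serialization, and $v_1(F) = 0$ forces $F_3$ to precede $E_2$ in $R_0$'s serialization. Combining these with the program-order facts $E_2 \prec E_3$ and $F_2 \prec F_3$ yields the chain $E_2 \prec E_3 \prec F_2 \prec F_3 \prec E_2$, a cycle. The delicate point is justifying that the union of program order with the two serialization orders is acyclic; this is exactly where the serial-register assumption does its work, and it is the hinge on which the argument turns.

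With that lemma in hand, only three combinations survive, each settled by substituting into line $4_i$. Assume first $n_0(E) < n_1(F)$. If $v_0(E) = 0$ and $v_1(F) = n_0(E)$, then $val_0(E) = 0$, while $v_1(F) = n_0(E) < n_1(F)$ gives $val_1(F) = 1$. If $v_0(E) = n_1(F)$ and $v_1(F) = 0$, then $v_0(E) = n_1(F) > n_0(E)$ gives $val_0(E) = -1$, while $val_1(F) = 0$. If both reads return the other's number, then $val_0(E) = -1$ and $val_1(F) = 1$. In every surviving case $val_0(E) < val_1(F)$, as required; the inequality for $n_1(F) < n_0(E)$ follows from the symmetry of the protocol under swapping the two indices.

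Finally, for $n_0(E) = n_1(F)$, writing $n$ for their common value, the same three combinations exhaust the possibilities, and in each one every nonzero read returns a value equal to the reader's own picked number $n$, triggering the first branch $val_i = 0$; together with the zero-read branch this forces $val_0(E) = val_1(F) = 0$, completing the proof.
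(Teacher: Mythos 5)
Your proof is correct and follows essentially the same route as the paper's argument: both hinge on the write-before-read program order ($E_2<E_3$, $F_2<F_3$) together with the temporal ordering of the register events to rule out the case that both reads return the initial value $0$, and then finish by checking the surviving value combinations against line $4_i$. Your key lemma is just the contrapositive of the paper's own dichotomy $E_2<F_3\vee F_2<E_3$ (your cycle $E_2\prec E_3\prec F_2\prec F_3\prec E_2$ contradicts transitivity and irreflexivity of the single temporal precedence order, so the ``delicate point'' you flag is immediate in the paper's setting), and your organization by read values rather than by the temporal relation between $E_2$ and $F_3$ is a cosmetic difference.
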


In this section we argue informally for this theorem, and to simplify the proof we assume for the moment that the registers are serial.
This assumption will be relaxed in sections \ref{SRR} and \ref{S4}  which deal with regular registers. 
Register $R$ is serial if:
\begin{enumerate}
\item the read/write events on that register are linearly ordered, 
and 
\item the value of any read $r$ of $R$ is equal to the value of the last
write event $w$ on $R$ that precedes $r$ or to the initial value of that register in case there is no write event
that precedes $r$.
\end{enumerate}
 In section \ref{S3} we describe the invariant based
 approach to the proof of this theorem (this is the standard approach), and then in section \ref{Sec6} 
 we reprove this theorem in our model theoretic formulation and approach.

\subsubsection*{Informal argument for Theorem \ref{T1}.}
\label{SSInf}
 Let $E$ and $F$ be protocol execution by $p_0$ and $p_1$ (respectively) which are assumed to be executed concurrently. 
Recal our convention to denote with $x(E)$ (and $x(F)$) the values of any variable $x$ of $p_0$ (respectively $p_1$) at the
end of $E$ ($F$).

  Assume for example
that \begin{equation} 
\label{E1}
n_0(E)<n_1(F).
\end{equation} That is, assume that the number obtained by $p_1$ in its pick-a-number
 is greater than the one obtained by $p_0$. We have to show that
$val_0(E)<val_1(F)$ in this case. The other items of theorem \ref{T1} will follow by a similar argument.

We denote with $E_1, E_2, E_3, E_4$ the events in $E$ (executions of the four instructions at lines $1_0,\ldots,4_0$ of the protocol of $p_0$) and with
$F_1, \cdots, F_4$ the corresponding events in $F$. For any two events $a$ and $b$ we write $a<b$ to say that $a$ precedes $b$ (i.e. $a$
terminates before $b$ begins)\footnote{So $<$ is used for both the number ordering and the temporal precedence ordering. In simple situations such as ours this should create no confusion.}. Thus we have   $E_1<  E_2 < E_3 < E_4$
and $F_1<F_2<F_3<F_4$, but any interleaving of the $E$ events and the $F$ events is possible.

The key fact from which our theorem follows is that each process first writes and then reads. So $E_2<E_3$, i.e. the execution of
$R_0:= n_0$ precedes that of $v_0:= R_1$, and likewise $F_2<F_3$.

There is just one write event on register $R_0$, namely the write $E_2$ which 
is of value $n_0(E)$. 
Likewise there is just one write event on register $R_1$:  the write $F_2$ of value $n_1(F)$.

Comparing the write $E_2$ on register $R_0$ with the read $F_3$ of that register by $p_1$ there are two possibilities.
\begin{enumerate}
\item $F_3<E_2$. In this case  $F_3$ obtains the initial value $0$, and then $v_1(F)=0$ which implies that $val_1(F)=0$.
But as we have that $F_2<F_3<E_2<E_3$, we get that $F_2<E_3$, and hence the read $E_3$ of register $R_1$ obtains the value
of the write $F_2$ on that register which is of value $n_1(F)$. Since we assume that $n_0(E)<n_1(F)$, it follows that $val_0(E)=-1$. Thus we have in this
case that $val_0(E)<val_1(F)$.

\item $E_2<F_3$. In this case $F_3$ obtains the value $n_0(E)$ that was written by $E_2$ on $R_0$, and since we assume that
$n_0(E)<n_1(F)$, $val_1(F)=1$. Depending on the temporal relation between the write $F_2$ and the read $E_3$,
 the read event $E_3$ can either return the initial value $0$ of register $R_1$ (in which case $val_0(E)=0$)
or else the value of the write $F_2$ (in which case $val_0(E)=-1$). In both cases we get that
$val_0(E)<val_1(F)$.
\end{enumerate}

I hope that my readers find this proof of Theorem \ref{T1} satisfactory, and so they may ask ``In what sense that proof is informal, and what is meant here by the term formal proof''? This question is  discussed in section \ref{S3.1}, and at this stage let me only make the following remark. The problem with the proof outlined here is not so much 
in that some details are missing or some assumptions are hidden, but rather in that the connection
between the algorithm text and the resulting executions is not established. In other words,
the proof provides no answer to the question of defining the structures that represent possible executions of the algorithm, 
and hence the very base of the proof is missing.

The standard state-and-history approach gives a satisfactory answer to this question by defining what are executions (also called histories, i.e. sequences of states) and we describe in the following section this standard approach and find
 an invariant that proves theorem \ref{T1}.
Then, in the later sections we outline a formal, mathematical proof of theorem \ref{T1} that follows the steps of the informal
description that was given above. That proof is necessarily longer because it requires some background preparation, namely the
definition of system-executions as Tarskian structures\footnote{The term ``structure'' is so overloaded that we have to be more 
specific, and with {\em Tarskian structures} we refer here to those structures that are used in model theory
as interpretations of first-order languages (see also Section \ref{SReg}).}. As the reader will notice, that proof with Tarskian system executions has the same structure as the informal argument
  that was given above, and this similarity
 between the informal argument and its formal development is in my opinion an important advantage  of the event-based model theoretic approach that we sketch in this paper.

\section{State and history proof of Kishon's algorithm}
\label{S3}
In this section the term {\em state} refers to the notion of {\em global state}, but  in the
event-based approach, as we shall later see, a different notion of state is employed in which
 only local states of individual serial processes are used.
A (global) state of a distributed system is a description of the system at some instant, as if taken by some global snapshot.
A state  is   a function that assigns values to all the state variables. The state variables of the Kishon's algorithm
 are the program variables ($R_i, n_i,v_i$ and $val_i$ for $i=0,1$) and the two program counters $PC_0$ and $PC_1$ of $p_0$ and $p_1$. 
 If $S$ is a state and $x$ is a state variable, then $S(x)$ is the value
of $x$ at state $S$. $S(PC_0)=3_0$ for example means (intuitively) that $p_0$ is about to execute the instruction at line $3_0$ of its
protocol. A {\em step} is a pair of states
$(S,T)$ that describes an execution of an instruction of the protocol, either the instruction at line $S(PC_0)$ if this is a step by $p_0$
or the instruction at line $S(PC_1)$ if this is a step by $p_1$. And $T$ is the resulting state of that execution.

For example, a pair of states $(S,T)$ is a read-of-register-$R_1$ by $p_0$ (also said to be a $(3_0,4_0)$ step) if the following hold.
\begin{enumerate}
\item $S(PC_0)=3_0$ and $T(PC_0)=4_0$,
\item $T(v_0)=S(R_1)$,
\item for any state variable $x$ other than $PC_0$ and $v_0$, $T(x)=S(x)$.
\end{enumerate}

We leave it to the reader to define all eight classes of steps: $(1_0,2_0),\ldots,(4_0,5_0)$ and $(1_1,2_1),\ldots,(4_1,5_1)$.

An initial state, is a state $S$ for which $S(PC_0)=1_0$, $S(PC_1)=1_1$, and $S(x)=0$ for any other state variable. 
(In particular $S(R_0)=S(R_1)=0$).
A final state is a state $S$ for which $S(PC_0)=5_0$ and $S(PC_1)=5_1$. Note that if $S$ is not a final state then there exists
a state $T$ such that $(S,T)$ is a step.

Each state variable has a type. For example, the type of $v_i$ is the set of natural numbers. We denote with $\tau$ the conjunction
which says that each variable is in its type. Clearly the value of each variable
in the initial state is in its type. A simple observation is that
if $(S,T)$ is any step and for every state variable $x$, $S(x)$ is in the type of $x$, then $T(x)$ is also in that type.

 The basic sentential formulas are equality and comparison of state variables. For example $R_0=v_1$, $PC_0\geq 2$, and
 $n_0>0$ are basic formulas. Statements about the types of variables are also basic. For example, ``$n_0$ is in $\mathbb N$''.
If $S$ is any state and $\varphi$ a sentential formula, then $S\models \varphi$ means that $\varphi$ holds in $S$. For example
$S\models v_0<n_0$ iff $S(v_0)<S(n_0)$. Similarly $S\models PC_0\leq 2_0$ iff $S(PC_0)=1_0,2_0$. 
A sentential formula is obtained from basic (primitive) sentential formulas with logical connectives (conjunction, disjunction, negation and
implication).

A sentential formula $\varphi$ is said to be an invariant (also, an inductive invariant) if the following holds.
\begin{enumerate}
\item If $S$ is the initial state then $S\models\varphi$.
\item For every step $(S,T)$, if $S\models \varphi$ then also $T\models\varphi$.
\end{enumerate}
To prove that $\varphi$ is an invariant one has to prove that it holds in the initial state, and then to prove for every step $(S,T)$
 that either $S\not\models\varphi$ or $T\models \varphi$.

A  {\em history sequence} is a sequence of states $S_0,\ldots,S_k,\ldots$ such that (1) $S_0$ is the initial state, (2)  for every state $S_j$ in the sequence that is not a final state, $S_{j+1}$ exists in the sequence and
$(S_j,S_{j+1})$ is a step by $p_0$ or by $p_1$. In our simple algorithm, every terminating history has
exactly four steps by $p_0$ and four by $p_1$. So a terminating history sequence
for the Kishon's algorithm has the form $S_0,\ldots,S_8$. There are as many histories as there are possible ways
of interleavings of $p_0$ and $p_1$ steps.

The following easy theorem is of prime importance even though its proof is immediate:
\begin{quote}
 If $\varphi$ is an invariant formula, and $S_0,\ldots,S_k$ is any history sequence, then
$S_m\models \varphi$ for every index $0\leq m\leq k$.
\end{quote}

The counterpart of Theorem \ref{T1} in this history context is the following.
\begin{theorem}
\label{T2}
If $S_0,\ldots,S_8$ is a terminating history sequence of the Kishon's algorithm, then for each $i\in\{0,1\}$ \[S_8 \models (n_i<n_{1-i}\to val_i <val_{1-i}).\]
\end{theorem}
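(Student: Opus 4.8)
The plan is to run the invariant recipe to the letter: produce one inductive invariant $\varphi$, invoke the theorem quoted above (an invariant holds at every state of every history, in particular at $S_8$), and then read Theorem~\ref{T2} off the values that $\varphi$ forces at the terminating state. The whole argument then splits into three tasks: guessing $\varphi$, discharging the two clauses in the definition of ``invariant,'' and a short arithmetic at $S_8$.

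The informal argument of Section~\ref{SSInf} tells us exactly which facts $\varphi$ must record; the only craft is to \emph{guard each fact by a program counter} so that the conjunction becomes self-sustaining. I would take $\varphi$ to be the conjunction, over $i\in\{0,1\}$, of the clauses
\[ \psi_i:\ PC_i\ge 2_i\to n_i>0, \]
\[ \rho_i:\ (PC_i\le 2_i\to R_i=0)\wedge(PC_i\ge 3_i\to R_i=n_i), \]
\[ \alpha_i:\ PC_i\ge 4_i\to (v_i=0\vee v_i=n_{1-i}), \]
together with the single ``no mutual miss'' clause
\[ \mu:\ \neg\,(PC_0\ge 4_0\wedge v_0=0\wedge PC_1\ge 4_1\wedge v_1=0). \]
Here $\psi_i,\rho_i$ say that register $R_i$ is $0$ before its owner's write at line~$2_i$ and equals the positive picked number afterwards; $\alpha_i$ says a completed read returns either $0$ or the other register's value; and $\mu$ is the formal embodiment of ``each process writes before it reads, so the two reads cannot both return~$0$.'' One may also append a clause that, for $PC_i\ge 5_i$, merely propagates the value of $val_i$ as computed from $v_i$ and $n_i$ at line~$4_i$.

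For the induction, the base case is immediate: the initial state sets every variable to $0$ and puts each $PC_i$ at $1_i$, so every guard $PC_i\ge 2_i,\dots$ is false. For the inductive step one checks the eight step classes $(1_i,2_i),\dots,(4_i,5_i)$; almost all are trivial because they touch variables the clauses do not constrain. The only informative steps are the reads $(3_i,4_i)$: when $p_0$ executes $(3_0,4_0)$ it sets $v_0:=R_1$, and $\rho_1$ forces $R_1\in\{0,n_1\}$, which gives $\alpha_0$; moreover if this read yields $v_0=0$ then $\rho_1$ forces $PC_1\le 2_1$, so the conjunct $PC_1\ge 4_1$ of $\mu$ fails and $\mu$ survives. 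Symmetrically for $(3_1,4_1)$. Note that the verification of $\mu$ consumes $\rho_0$ and $\rho_1$, which is precisely why the per-process clauses had to be strengthened into two-sided, counter-guarded statements.

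Finally, at $S_8$ we have $PC_0=5_0$ and $PC_1=5_1$, so $\varphi$ yields $n_0,n_1>0$, $v_0\in\{0,n_1\}$, $v_1\in\{0,n_0\}$, and (by $\mu$) not both of $v_0,v_1$ equal to $0$. Assume $n_0<n_1$. Then $v_0=0$ gives $val_0=0$ while $v_0=n_1>n_0$ gives $val_0=-1$, so $val_0\le 0$ with equality exactly when $v_0=0$; symmetrically $val_1\ge 0$ with equality exactly when $v_1=0$. Since $\mu$ forbids $v_0=v_1=0$, the bracketing $val_0\le 0\le val_1$ cannot collapse to $0=0$, whence $val_0<val_1$, which is the $i=0$ instance; the $i=1$ instance is identical after swapping subscripts. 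I expect the main obstacle to be neither the induction nor this arithmetic but the \emph{discovery} of $\mu$ together with the realization that $\rho_i$ must pin down $R_i$ on both sides of the write: this is the familiar phenomenon that the naive target property is not itself inductive and must be strengthened with program-counter information before the induction can close.
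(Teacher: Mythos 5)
Your proposal is correct and follows essentially the same route as the paper: a PC-guarded inductive invariant whose per-process clauses pin down $n_i$ and $R_i$ (your $\psi_i,\rho_i$ are the paper's $\alpha1$--$\alpha3$ and $\beta1$--$\beta3$, and your propagated $val_i$ clause is $\alpha5/\beta5$) plus one cross-process clause ruling out that both reads return $0$, followed by the same case analysis at the final state. The only difference is one of phrasing: the paper states the read clause as $v_0\neq 0\to v_0=R_1$ and the cross-process clause as $PC_0\geq 4_0\wedge PC_1\geq 4_1\to v_0=R_1\vee v_1=R_0$, going through the registers, whereas your $\alpha_i$ and $\mu$ express the same content directly in terms of $n_{1-i}$ and the value $0$ (note only that deducing $PC_1\leq 2_1$ from $v_0=0$ uses $\psi_1$ as well as $\rho_1$).
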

To prove this theorem we have to find an invariant $\varphi$ such that for each $i\in\{0,1\}$
 \begin{equation}
 \label{Eq}(\varphi\wedge PC_0=5_0\wedge PC_1=5_1)\to (n_i<n_{1-i}\to val_i <val_{1-i}).\end{equation}
\begin{equation}
 \label{Eqq}(\varphi\wedge PC_0=5_0\wedge PC_1=5_1)\to (n_0=n_{1}\to val_0  = val_{1}).\end{equation}
 The difficult part of the proof (which is a typical difficulty for this type of proofs) is to find the required invariant. Once it is found, the proof that $\varphi$ is indeed an invariant is a routine checking.
The invariant that I have found for this theorem is described now.

Let $\alpha$ be the conjunction of the following five sentences.
\begin{enumerate}

\item[$\alpha1.$] $PC_0\geq 2_0\to n_0>0$.

\item[$\alpha2.$] $PC_0\leq 2_0\to R_0=0$.
\item[$\alpha3.$] $PC_0\geq 3_0\to R_0=n_0$.
\item[$\alpha4.$] $v_0\not= 0\to v_0=R_1$. 
\item[$\alpha5.$] $PC_0=5_0\to $

\hspace*{7pt}$(v_0=0\to val_0=0 )\wedge (v_0=n_0\to val_0=0)\wedge$\\ 
$\hspace*{7pt} (0<v_0<n_0\to val_0=1 )\wedge(v_0>n_0\to val_0=-1)$.

\end{enumerate}

Let $\beta$ the conjunction of the corresponding five sentences.
\begin{enumerate}
\item[$\beta1$.] $PC_1\geq 2_1\to n_1>0$.
\item[$\beta2$.] $PC_1\leq 2_1\to R_1=0$.
\item[$\beta3$.] $PC_1\geq 3_1\to R_1=n_1$.

\item[$\beta4$.] $v_1\not= 0\to v_1= R_0$. 
\item[$\beta5$.] $PC_1=5_1\to $

$\hspace{2mm}(v_1=0\to val_1=0 )\wedge (v_1=n_1\to val_1=0)\wedge$\\  
$\hspace*{2mm} (0<v_1<n_1\to val_1=1 ) \wedge (v_1>n_1\to val_1=-1)$.
\end{enumerate}

Let $\gamma$ be the following sentence.
\[PC_0\geq 4_0 \wedge PC_1\geq 4_1\, \to\,  v_0=R_1 \vee v_1=R_0.\]

Let $\tau$ be the sentence saying that each state variable is in its type.

Our invariant $\varphi$ is the conjunction \[\alpha\wedge\beta\wedge \gamma\wedge \tau.\]

To prove that $\varphi$ is an invariant (of the Kishon's algorithm) we have first to prove that $\varphi$ holds in the initial state,
and then prove for every step $(S,T)$ that if $S\models \varphi$, then $T\models\varphi$ as well.  In proving this implication, one can rely only on the definition of steps:
the program itself cannot be consulted at this stage.

\begin{theorem}  $\varphi$ is an invariant.\end{theorem}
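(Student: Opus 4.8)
The plan is to verify directly the two clauses defining an inductive invariant: that $\varphi$ holds at the initial state, and that $S\models\varphi$ implies $T\models\varphi$ for every step $(S,T)$. The first clause is immediate. At the initial state both program counters are at line $1$ and every other state variable is $0$, so each implication of $\alpha,\beta,\gamma$ either has a false antecedent (those guarded by $PC_i\geq 2_i,\geq 3_i,\geq 4_i,=5_i$, or by $v_i\neq 0$) and is vacuously true, or---as with $\alpha2,\beta2$---holds because every register equals $0$; and $\tau$ holds since every variable lies in its type. Hence essentially all the work is in the inductive step.

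For the inductive step I would exploit the symmetry $p_0\leftrightarrow p_1$ and treat in detail only the four $p_0$-steps $(1_0,2_0),\dots,(4_0,5_0)$, the $p_1$-steps being verified identically and $\gamma,\tau$ being symmetric. For a fixed step the guiding dichotomy is: a conjunct whose variables are untouched and whose program-counter guard is not newly satisfied is preserved trivially, while the remaining conjuncts must be re-established from the definition of the step. Thus $\alpha1$ is installed at $(1_0,2_0)$ since pick-a-number returns a value $>0$; $\alpha3$ is installed at $(2_0,3_0)$, which sets $R_0:=n_0$; and for $\alpha5$ at $(4_0,5_0)$ one checks that the three branches of the conditional at line $4_0$ map onto the four implications of $\alpha5$, the only point needing care being that the guard $v_0<n_0$ is reached only after $v_0\neq 0$ has been ruled out, which is exactly why $\alpha5$ may safely write $0<v_0<n_0\to val_0=1$. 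Preservation of $\tau$ at every step is the general type-preservation observation already recorded in the text.

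The two genuinely coupled conjuncts are $\beta4$ (and its mirror $\alpha4$) and $\gamma$, and these are where I expect the only friction. For $\beta4$, which asserts $v_1\neq 0\to v_1=R_0$, the dangerous step is $(2_0,3_0)$, the single $p_0$-step that alters $R_0$; here I would not argue that $R_0$ is untouched but instead observe that $S(PC_0)=2_0$ together with $\alpha2$ forces $S(R_0)=0$, whence $S\models\beta4$ forces $S(v_1)=0$, and since the step leaves $v_1$ unchanged, $T(v_1)=0$ and $\beta4$ holds vacuously in $T$. For $\gamma$ the only steps that can newly satisfy the antecedent $PC_0\geq 4_0\wedge PC_1\geq 4_1$ are the two read-steps: at $(3_0,4_0)$ the assignment $v_0:=R_1$ (with $R_1$ untouched) makes the disjunct $v_0=R_1$ true in $T$, and symmetrically $(3_1,4_1)$ makes $v_1=R_0$ true, so the consequent of $\gamma$ is re-established by the very read that activates its antecedent. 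Once both counters have reached $4$ the only further steps are $(4_0,5_0)$ and $(4_1,5_1)$, which touch only the $val$ variables and therefore leave $\gamma$'s consequent intact.

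The main obstacle is accordingly not any single hard case but the bookkeeping discipline that makes $\gamma$ and $\beta4/\alpha4$ visibly safe: the fact to keep in view is that each register is written exactly once and only while its owner's counter is below $4$, so a disjunct of $\gamma$ recorded by a read can never be invalidated by a later write, and a nonzero value captured by $\alpha4/\beta4$ can never be stranded. With this single-write observation fixed, each remaining (conjunct, step) verification reduces to noting which variables the step changes and invoking one of the already-established conjuncts, so the whole check is routine.
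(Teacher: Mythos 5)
Your verification is correct and is precisely the routine check that the paper alludes to but explicitly omits (``we will not go over this lengthy (and rather routine) checking''): you confirm $\varphi$ at the initial state and then preserve each conjunct across the eight step classes exactly as the definition of an inductive invariant demands. You also correctly isolate and resolve the only two genuinely non-trivial interactions---preservation of $\beta4$ (and its mirror $\alpha4$) under the write step $(2_0,3_0)$, where $\alpha2$ forces $S(R_0)=0$ and hence $S(v_1)=0$ so that $\beta4$ becomes vacuous in $T$, and the re-establishment of $\gamma$'s consequent by the very read step that first activates its antecedent---so your sketch fills the gap the paper leaves with nothing missing.
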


 We will not go over this lengthy (and rather routine)
checking, since what interests us here are the concepts involved with the invariant method rather than the details of the proof.
   Assuming that $\varphi$ is indeed an invariant, we shall conclude the proof of theorem \ref{T2} by proving that equations (\ref{Eq}) 
	and (\ref{Eqq}) hold for $i=0,1$.
  This is obtained immediately from the following lemma.
	
	\begin{lemma}
	\label{L3.2}
	Assume that $S$ is any state such that
 $S\models \varphi\wedge PC_0=5_0 \wedge PC_1=5_1$. Then 
 \begin{equation}
 \label{E2}
 S\models (n_0=n_1\to val_0=val_1 = 0), \text{ and }
 \end{equation}

\begin{equation}
\label{E2a}
S\models n_i<n_{1-i} \to (val_i<val_{1-i})
\end{equation}
for $i=0,1$.
\end{lemma}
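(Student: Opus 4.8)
The plan is to substitute $PC_0=5_0$ and $PC_1=5_1$ into the conjuncts of the invariant $\varphi=\alpha\wedge\beta\wedge\gamma\wedge\tau$, read off a small list of facts about the final values of the variables, and then carry out a short case analysis on the order relation between $n_0$ and $n_1$. First I would extract the usable consequences. From $\alpha1$ and $\beta1$ (applicable since $PC_i=5_i\geq 2_i$) we get $n_0>0$ and $n_1>0$; from $\alpha3$ and $\beta3$ we get $R_0=n_0$ and $R_1=n_1$. Combining $\alpha4$ with $R_1=n_1$ shows $v_0\in\{0,n_1\}$, and combining $\beta4$ with $R_0=n_0$ shows $v_1\in\{0,n_0\}$. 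Finally, since $PC_0=5_0\geq 4_0$ and $PC_1=5_1\geq 4_1$, the antecedent of $\gamma$ is satisfied, so $v_0=R_1\vee v_1=R_0$, i.e. $v_0=n_1\vee v_1=n_0$. Under $PC_i=5_i$ the clauses $\alpha5$ and $\beta5$ become the full value-setting rules for $val_0$ and $val_1$, and these facts together are all that the proof needs.

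For equation~(\ref{E2}), assume $n_0=n_1$. Then $v_0\in\{0,n_1\}=\{0,n_0\}$, so either the $v_0=0$ clause or the $v_0=n_0$ clause of $\alpha5$ applies, and in both cases $val_0=0$; by the symmetric reasoning with $\beta5$, $val_1=0$. I note that this case does not even require $\gamma$.

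For equation~(\ref{E2a}) with $i=0$, assume $n_0<n_1$. From $v_0\in\{0,n_1\}$ together with $n_1>n_0$, the applicable clause of $\alpha5$ gives $val_0=0$ (if $v_0=0$) or $val_0=-1$ (if $v_0=n_1>n_0$), so in all cases $val_0\leq 0$. From $v_1\in\{0,n_0\}$ together with $0<n_0<n_1$, $\beta5$ gives $val_1=0$ (if $v_1=0$) or $val_1=1$ (if $v_1=n_0$, since then $0<v_1<n_1$), so $val_1\geq 0$. To upgrade $val_0\leq 0\leq val_1$ to a \emph{strict} inequality I would invoke $\gamma$: if the disjunct $v_0=n_1$ holds then $val_0=-1<0\leq val_1$, while if $v_1=n_0$ holds then $val_1=1>0\geq val_0$; either way $val_0<val_1$. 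The case $i=1$, i.e. $n_1<n_0$, is entirely symmetric, interchanging the roles of $\alpha$ and $\beta$.

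The main obstacle is exactly the strict inequality in~(\ref{E2a}). The weak bounds $val_0\leq 0\leq val_1$ fall out for free from the ranges of $v_0$ and $v_1$, but by themselves they do not exclude the degenerate possibility $val_0=val_1=0$ in which both processes happen to read the initial register value $0$. The conjunct $\gamma$ is precisely the state-level image of the informal observation that each process writes its own register before reading the other's, and it is this conjunct that rules the degenerate case out, guaranteeing that at least one process sees the other's nonzero number and so breaks the tie.
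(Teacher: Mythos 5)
Your proof is correct and uses exactly the same ingredients as the paper's: $\alpha 3,\beta 3$ to identify the final register contents with $n_0,n_1$, $\alpha 4,\beta 4$ to pin down $v_0\in\{0,n_1\}$ and $v_1\in\{0,n_0\}$, $\alpha 5,\beta 5$ to read off the $val$'s, and $\gamma$ to exclude the tie $val_0=val_1=0$. The only difference is organizational: the paper splits into cases on the two disjuncts of $\gamma$ at the outset and excludes the offending value of the other process's $val$ by contradiction, whereas you first derive the unconditional bounds $val_0\leq 0\leq val_1$ and invoke $\gamma$ only at the end to get strictness --- and you also write out the argument for (\ref{E2}), which the paper omits as ``similar.''
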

\begin{proof}
We prove only (\ref{E2a}) (since the proof of (\ref{E2}) uses similar arguments) and  only for $i=0$.
Assume that $S$ is any state such that
 $S\models \varphi\wedge PC_0=5_0 \wedge PC_1=5_1$. 
Assume also that \[S\models n_0<n_1.\] 
Since $S\models \gamma$, there are two cases in the proof that $S\models val_0<val_1$: $S\models v_0 = R_1$, and $S\models v_1 = R_0$.

{\bf Case} $S\models v_0=R_1$. Since $S\models PC_1\geq 3_1$, $\beta3$ implies that $R_1=n_1$. We assume that $n_0<n_1$, and
hence $n_0<v_0$ in $S$. This implies that $S\models \Val_0=-1$ (by $\alpha5$).   
 The possible values for $\Val_1$ are $-1,0,1$, and so it suffices to exclude the possibility of $\Val_1=-1$ in $S$ in order to
conclude that $S\models \Val_0<\Val_1$. Suppose, for a contradiction, that $\Val_1=-1$ in $S$. This implies (by $\beta5$)
that $v_1>n_1$ in $S$. Hence in particular $v_1\not=0$, and $\beta4$ implies that $v_1=R_0$ in $S$. But $R_0=n_0$ (by $\alpha3$)
and so $n_0=v_1>n_1$ follows in $S$ in contradiction to our assumption above.  

{\bf Case} $S\models v_1=R_0$. By $\alpha3$, $R_0=n_0$, and since $n_0>0$ (by $\alpha1$) we get $0<n_0=R_0=v_1 $. Since $n_0< n_1$ in $S$,
$0<v_1<n_1$ in $S$.
So $\Val_1=1$ in $S$. Again, the possible values of $\Val_0$ are $-1,0,1$, and so it suffices to exclude the possibility of
$\Val_0=1$ in order to conclude that $S\models \Val_0<\Val_1$. Suppose for a contradiction that $\Val_0=1$, and conclude that
$0<v_0<n_0$ in $S$. But $v_0=R_1$ (by $\alpha4$) and $R_1=n_1$ in $S$ (by $\beta3$) and hence $n_1<n_0$ in contradiction
to our assumption.\end{proof}

\subsection{An intermediate discussion}
\label{S3.1}
The term ``formal proof'' has a rather definite  meaning in mathematical logic; namely it refers to some set of deduction
rules and axioms in some formal language that determine which sequences of formulas in that language form a proof.
But, in this article, we intend a less stringent usage of this term. A formal-mathematical proof (as opposed to an intuitive informal
 argument) 
is a rigorous mathematical argument that establishes the truth of some sentence written in a formal language.
We require three prerequisites that an argument has to have in order to be considered  a rigorous
mathematical proof in this sense.
(1) There has to be a formal language in which the theorem is stated and in which the main steps of the proof are formulated. (2) there
is a mathematical definition for the class of structures that
interpret the language and a definition for the satisfaction relation that determines when a sentence
$\varphi$  holds in a structure $M$. (3) A proof of a sentence $\varphi$ has to establish that $\varphi$ holds (is satisfied) in every structure
in the intended manifold of structures. In this sense, the proof of theorem \ref{T2} given here in the state-history approach is a satisfying formal-mathematical proof. Of course, some details were omitted and the definitions were not sufficiently
general (they were tailored to the specific Kishon's algorithm), but the  desired prerequisites are there: the language is the sentential
language, and the relation $S\models\varphi$ for a state $S$ and sentence $\varphi$ is (as the reader knows)
 well-defined. With the notions of steps and history
sequences, the framework for proving theorem \ref{T2} is mathematically satisfying, even if my presentation has left something to be desired.

This is not the case with the informal argument for theorem  \ref{T1}: the language in which this theorem is enunciated is not defined, and more
importantly, the structures to which this theorem refers to are not defined. That is, we did not define those structures that
represent executions of the algorithm. As a consequence of these deficiencies the informal argument presented in \ref{SSInf}
does not establish any formal connection between the algorithm text and its executions.   We will establish such a connection
 in section \ref{SReg}, essentially by transforming a certain type of history
sequences into first-order structures. My aim in doing so is
to show how the informal argument for Kishon's algorithm can be transformed into a rigorous mathematical proof, and thus to exemplify the event-based proof framework for distributed systems. To say it in two words, the
formal language for this framework is first-order predicate language, and the structures are Tarskian structures, i.e. interpretations of that predicate language.

The invariant proof method is not without its own typical problems which should be exposed in order to promote the possible
relevance of alternative or complimentary methods.

\noindent
{\bf The difficulty in finding an invariant.} An advantage of the standard state and history based proofs is not only that they yield satisfying mathematical proofs but also that the required notions
 on which such proofs are based
 (states, steps, history-sequences etc.) are easily defined. Yet, and this is a major problem, these proofs are based on the notion of {\em invariant}, and these
 invariants are notoriously difficult (and sometimes very difficult) to obtain.  It took me almost a day to find the invariant
$\alpha\wedge\beta\wedge\gamma$ (checking that it is indeed an invariant can also be long, but with software tools this would be
 an instantaneous work). The reader who is perhaps more experienced in finding invariants may need less
 than a day, but surely more than the five or ten minutes that it takes to find a convincing intuitive argument for the event-based
 theorem \ref{T1} which relies on the checking of different temporal event-scenarios.

{\bf The state and history approach does not show in a clear way where the assumptions on the communication devices are used.}
The role of an assumption in a mathematical proof is often clarified by either finding a proof with a weaker assumption or else
by showing the necessity of that assumption.
In contrast, Theorem \ref{T1}  has an assumption that the registers are serial, but in the assertion-based proof of that theorem  this assumption somehow disappears. It is hidden, in a sense, in the definition of steps.
History sequences, as defined here, can only have serial registers, and thus the status of register seriality as an assumption is unclear.

{\bf Seriality of the registers is a consequence, a theorem, not an assumption.}
From a different angle, the preceding  point can be explained by considering a history sequence $H$ of some algorithm in which the registers
are assumed to be serial.  Seriality of a register means that for every reading step $r$ of the register in $H$, the value of $r$ is the value of the last
previous write step in $H$ that precedes $r$ (or the initial value of $R$ in case no such write step exists).
Now the seriality of $R$ in history $H$ is a {\em theorem} not an assumption on the history. To prove this theorem, we can survey all steps and realize that
only write steps may change the value of the register in a state, and from this we can deduce that the value of any read step is the value of the last write step
in $H$ that precedes that read. Being a theorem, a consequence, how can we say that seriality is an assumption?

{\bf The problem with proving correctness for regular registers.}
This point can be clarified even further by considering regular and safe registers. We may ask: does Kishon's algorithm retain its properties
with only regular (or safe) registers?  We will see in the following subsection that almost the same informal
argument for theorem \ref{T1} works for regular registers, but it would be  a difficult challenge to prove the correctness
of Kishon's algorithm for regular registers in the state-histories approach. In order to represent a single-writer regular register
$R$, one has to represent each extended event
of process $p$ (such as a read or write event of some register) by a pair of {\em invoke} and {\em respond} actions by $p$ that may
appear in a history sequence with several actions of other processes in between.  Moreover, the writing process has to record two values,
the current value and the previous value, and the reading process has to keep an active bag of possible values so that its return value is  one of these possible values. It is not impossible to represents regular registers in such a way but it is quite complicated and I am not aware of any published invariant-based correctness proof for an algorithm that uses these registers.

 \subsubsection{Regular registers}
 \label{SRR}
The notions of safe, regular, and atomic registers were introduced by Lamport \cite{SE} in order to investigate
the situation where read and write operation executions can be concurrent. Thus, seriality is no longer assumed for these events.
 A (single-writer) register $R$ is regular if there is a specific process that can write on $R$, and any read
 of $R$ (by any process) returns a value $v$ that satisfies the following requirement. If there is no write event that
precedes that read, then $v$ is equal to the initial value of $R$ or to the value of some write event that is concurrent with
the read. If there is a write that precedes the read, then the value of the read is the value of the last write on $R$ that precedes the read, or the value of some
 write on $R$ that is concurrent with the read.

 In the context of regular registers, the precedence ordering $<$ on the events is a partial ordering which is not necessarily linear.
 An important property of that partial ordering is the Russell-Wiener property.

 \begin{equation}
 \label{RW}
 \text{For all events }a,b,c,d, \text{ if }a<b,c<d \text{ and } \neg( c<b), \text{ then } a<d.
 \end{equation}
  The justification for this property is evident when
 we think about interval orderings (further details are in Section \ref{SReg}).

  Let's repeat the intuitive proof of Theorem \ref{T1}  for the Kishon's algorithm, but now for regular registers.

 \noindent{\em Correctness of Kishon's Poker algorithm with regular registers.}
 Let E and F be concurrent protocol execution by $p_0$ and $p_1$ (respectively). Recall (from section \ref{KA}) that $E$ consists of four
  events $E_1,\ldots,E_4$ that correspond to executions of lines $1_0,\ldots,4_0$, and $F$ contains the corresponding four events of $p_1$:
	$F_1,\ldots,F_4$. Assume for example that 
	\begin{equation}
	\label{E7}
	n_0 (E) < n_ 1 (F);
	\end{equation}
	
	The initial values of the registers is $0$, and there is just one write event on register $R_0$, namely the write $E_2$ 
	which executes $R_0 := n_0$ and is of value $n_0(E)>0$. Hence (by the regularity of $R_0$) there are only two values that a
	read of register $R_0$ can return: the initial value $0$ and the value $n_0(E)>0$ of the write event $E_2$. Applying this
	observation to $F_3$ (the read event  of $R_0$ by $p_1$), the possible values of $F_3$ are  $0$ and  $n_0(E)>0$. We observe that
	if $E_2<F_3$, then the value of $F_3$ is necessarily $n_0(E)$.
	
	Likewise there is only one write event on register $R_1$, namely $F_2$ which is of value $n_1(F)>0$, and hence the regularity
	of $R_1$ implies that there are only two values that a read of register $R_1$ can return: the initial value $0$ and the
	value $n_1(F)>0$ of the write event $F_2$. Applying this observation to the read $E_3$, 
	\begin{equation}
	\label{Eq8}
	\text{the possible values of } E_3 \ \text{are $0$ and $n_1(F)$.}
	\end{equation} We observe that if $F_2<E_3$ then the value of $E_3$ is necessarily $n_1(F)$.
	
	Taking into account our assumption (\ref{E7}), there are only two possible values for $\Val_0(E)$:
	$0$ and $-1$ (use (\ref{Eq8})). And if $F_2<E_3$ then $\Val_0(E)$ is $-1$. Likewise, there are only two possible values
	for $\Val_1(F)$: $0$ and $1$, and if $E_2<F_3$ then $\Val_1(F)$ is $1$.
	
	So, in order to prove that $\Val_0(E)<\Val_1(F)$, we have to exclude the possibility that $\Val_0(E)=0 \wedge \Val_1(F)=0$.
	We observed that this possibility is indeed excluded if $F_2<E_3$ or if $E_2<F_3$. Therefore the following lemma establishes
	that $\Val_0(E)<\Val_1(F)$ and  proves theorem \ref{T1} for regular registers.

 \begin{lemma}
For every concurrent protocol executions $E$ and $F$ by $p_0$ and $p_1$, $F_2<E_3 \vee E_2<F_3$. 
\end{lemma}
\begin{proof} Assume that  $\neg (F_2< E_3)$. Then we have the following temporal relations.
\[ E_2<E_3,\ \neg (F_2< E_3),\ F_2<F_3.\]
Hence the Russell--Wiener property implies that $E_2<F_3$ as required. \end{proof}

The second part of our paper, sections \ref{SReg}, \ref{Sec6}, and \ref{S4}, is its main part in which we prove the correctness of
Kishon's Algorithm with regular registers in the event and model theoretic framework. Actually, the main part of the
 proof takes just a couple of pages, but the description of its framework necessitates a redefinition of states as (finite) Tarskian
system executions which takes some place.

\section{Tarskian system executions and regular registers}
\label{SReg} 
In this section we define the notion of Tarskian system execution that are used to explicate regular registers. This notion
relies on the notions of first-order language and interpretation which are defined and explained in any logic textbook.
Here we shall describe these notions in a concise way mainly by following an example.

  A Tarskian structure is an interpretation of a first-order language. A language, $L$, is specified
by listing its symbols, that is its sorts (for we shall use  many-sorted languages), its predicates, function symbols, and individual constants. 

For example,  the
language $L_{R}$ that we design now is used for specifying that $R$ is a single-writer regular register (owned by the writer process $p$). The following symbols are in $L_R$.
\begin{enumerate}
\item There are two sorts \Event\ and \Data.
 
\item The unary predicates are 
  $\Readit_R$ and $\Writeit_R$ and the predicate $p$.
Formula $\Read_R(e)$ for example says that event $e$ is a read event of register $R$, and formula $p(e)$ says that event $e$
is by process $p$ (the writer process). 	
	\item A binary predicate $<$ (called the temporal precedence relation) is defined over the events.
		\item There is one function symbol $\Val$, and one individual constants $d_R$ (called the initial value of register $R$).
		\end{enumerate}
	The language $L$ has an infinite set of variables, and it is also convenient to have separate sets of variables for each type.  
	We use here lower-case letters from the beginning of the alphabet, possibly with indexes, as \Event\ variables,
	and the letters
	$x$ and $y$ with or without indexes are used as general variables. It is convenient when designing $L_R$ to be more specific and
	to determine that predicates $\Readit_R$, $\Writeit_R$, $p$, and $<$ apply to sort \Event, that $\Val$ is a function from sort \Event\ to sort
	\Data, and that the constant $d_R$ is in sort \Data.

	A Tarskian structure $M$ that interprets this language $L$ consists of a set $U = U^M$, the ``universe of $M$'' which is the disjoint
	union of two subsets: $\Event^M$ and $\Data^M$ which are the interpretation of the \Event\ and \Data\ sorts by $M$. Then the unary predicates
	$\Writeit_R$, $\Readit_R$, and $p$ 
	are interpreted as subsets of $\Event^M$. The binary predicate $<$ is interpreted as a set of pairs $<^M\subset \Event^M\times \Event^M$
	(called the temporal precedence relation). $\Val$ is
	interpreted as a function $\Val^M:\Event^M\to \Data^M$. And the (initial value) constant $d_R$ is interpreted as a member
	$d_R^M$ of $\Data^M$.

	An interpretation
	of $L_R$ is a rather arbitrary structure $M$ which does not necessarily correspond to the idea that we 
	have of register behavior. It is by means of $L_R$ formulas that we can impose some discipline on the interpretations, and thus
	define the notion of register regularity as the class of those interpretations of $L_R$ that satisfy the required regularity formulas.

	Formulas of a logical language are
	formed by means of its predicates, function symbols and constants, together with the logical symbols: $\forall$ and $\exists$ (these are the quantifiers), and 
	 $\wedge$, $\vee$, $\neg$, $\to$ (these are the connectives). 
	
	Here are some examples of formulas.  $\forall x (\Event(x)\to \Writeit(x)\vee \Readit(x))$  says that every event is a \Writeit\ event or a \Readit\ event. Recalling that variable $a$ is restricted to events, we can write this as $\forall a(\Writeit(a)\vee \Readit(a))$. (This sentence doesn't exclude the possibility that an event is both a read and
	a write, and we can add $\neg\exists a (\Writeit(a)\wedge\Readit(a))$ to do just that). A formula  with no free variables is said to be a sentence, and any sentence $\varphi$ may be true in $M$, $M\models \varphi$, or false $M\not\models \varphi$. 
We can say that
$<$ is a partial ordering on sort \Event\ (i.e. an irreflexive and transitive relation) as follows.
\[ \forall a (a\not < a)\wedge \forall a,b,c(a<b\wedge b<c\to a<c).\]
Here $a\not < a$ is another way to write $\neg(a<a)$.
If $a$ and $b$ are incomparable events in the $<$ temporal precedence relation then we say that $a$ and $b$ are concurrent.
That is, the formula ``$a$ is concurrent with $b$'' is $\neg(a<b\vee b<a)$. (Thus any event is concurrent with itself.)

\begin{definition}
\label{TD}
Let $M$ be an interpretation of our language $L_R$ (or a similarly defined language).
 We say that $M$ is a ``Tarskian system-execution
interpretation of $L_R$''
if \end{definition} 
\begin{enumerate}
\item $<^M$ is a partial ordering that satisfies the Russell-Wiener property (see equation \ref{RW}). Equivalently, 
\begin{equation}
\forall a,b,c,d (a<b\wedge (b\text{ is concurrent with }c) \wedge c < d\to a<d).\end{equation}
\item
For every event $e\in E^M$, the set $\{x\in E^M\mid x <^M e\}$ is finite, and likewise
the set of events that are concurrent with $e$ is also finite. (Finiteness is not a first-order property.)
\end{enumerate}
The Russell-Wiener property makes sense if we think about events as entities that lie in time; they have (or can be represented by)
 a nonempty temporal extension. Every event may be
thought of as being represented with a nonempty interval of instants (say of real numbers). Then $a<b$ means (intuitively) that the temporal extension of event $a$ lies completely to the left of $b$'s extension. So events 
$a$ and $b$ are concurrent if and only if their temporal extensions have a common instant. Now the Russell-Wiener formula can be justified as follows.
Let $t$ be any instant that is both in the temporal interval of $b$ and of $c$. Since $a<b$, every instant of $a$ is before $t$, and similarly every instant of $d$ is after $t$, and hence every instant of $a$ is before every instant of $d$, that is $a<d$.

The finiteness property (item 2) can also be justified for the systems that we have in mind.

\begin{definition}
\label{D4.1}
Let $M$ be some Tarskian system execution interpretation of $L_R$. We say that $M$
  models the regularity of the single-writer register $R$ owned by process $p$ iff it satisfies the following sentences.
\end{definition}
\begin{enumerate}
\item Process $p$ is serial and all write events are by $p$. 
\begin{equation}
\begin{split}
  \forall a,b: &  \\
 &p(a)\wedge p(b)\Rightarrow (a\leq b\vee b\leq a)\ \wedge\\
  & \Writeit_R(a)\Rightarrow p(a).
\end{split}
\end{equation}

\item  No event is both a $\Writeit_R$ and a $\Readit_R$ event. 

\item For every $\Readit_R$ event $r$, $\Val(r)$ satisfies the following (not necessarily exclusive) disjunction.
\begin{enumerate}
\item For some $\Writeit_R$ event $w$ such that $\Val(r)=\Val(w)$,  $w$ is concurrent with $r$, or
\item for some $\Writeit_R$ event $w$ such that $\Val(r)=\Val(w)$, $w<r$ and there is no $\Writeit_R$ event $w'$
with $w<w'<r$, or
\item there is no $\Writeit_R$ event $w$ such that $w<r$, and $\Val(r)$ is the initial value $d_R$ of the register.
\end{enumerate}
\end{enumerate}
When this definition of regularity is expressed as a sentence $\rho$ in $L_{R}$, we can meaningfully say
that   a system execution $M$ that interprets $L_R$ is a model of register regularity iff $M\models \rho$.
 It is
often the case that mathematical English is easier for us to read and understand, and hence most 
statements are given here in mathematical English rather than in first-order formulas.

\section{Non-restricted system executions} 
\label{S4}
We explain in this section the notion of non-restricted semantics of  distributed systemד, a simple but essential 
notion for our approach. 

 Suppose a distributed system composed of 
serial processes that use diverse means of communication devices (such as shared memory registers, message passing channels etc.) where the operations of
each process are directed by some code that the process executes (serially). We ask: what can be said about the semantics of this code
when nothing can be said about its communication devices. On the positive side, we can say for example that instruction at line $k+1$
is executed after the one at line $k$, unless that one is transferring control to another line, we know how to execute
$\IF --- \THEN ---$ instructions etc. And on the negative side we cannot determine the value of a read instruction or the
effect of a write instruction. By {\em non-restricted} we mean the semantics of such a system
 when there is
absolutely no restriction on the behavior of its communication devices.
 So a read of a memory location
(a register) is not required to return the value of the last write event, and it may return any value in the type of that register--even
a value that has never been written. Likewise the contents of messages received is arbitrary, and a message received may have possibly never been sent. Surely not much can
be said about this non-restricted semantics when the different processes are not aware of the presence or absence of the other processes. However, as we shall see, by separating the semantics of the program from
the specification of the communication devices we obtains a greater flexibility and application range in our correctness proofs.

With reference to the code of Figure \ref{JP}, we ask about the semantics of the code for $p_i$ when only the minimal necessary assumptions are made on the read and write instruction executions. 
 These minimal assumptions suppose that
instructions have names and their actions have values. However, and here is the the idea of non-restriction, nothing
relates the value of a communication action to the value of another communication action.    
For example, instruction $R_0:= n_0$ is called a ``write instruction on register $R_0$'' and the  value of its execution is the
value of variable $n_0$,  
and likewise, instruction $v_0:= R_1$ is called ``read of register $R_1$'' and its execution has a value that is assigned to variable $v_0$.
The only thing that can be said about the value of such a read event is that its value  is in the type of $v_0$, i.e. 
natural numbers in our case. The minimal assumptions, however, 
 do not relate this value to
	the value of register $R_1$. In fact we even do not assume that there exists an object called $R_1$, and the expression
	``the value of register
$R_1$''  is meaningless here. The semantics of the program under such minimal assumptions is the
non-restricted semantics of the algorithm, and a detailed definition is given next.

\subsection{An approximation to the non-restricted semantics}
\label{Sub5.1}
In order to give a better explanation of the non-restricted semantics of Kishon's Poker algorithm, we define here that
semantics with local states, local steps, and local history sequences (as opposed to
the global states and steps that were discussed in section \ref{S3}).  This explanation  is only an approximation, a presentation
of the idea, and a fuller presentation will be described in Section \ref{PNP} only after the benefit of non-restricted semantics 
is made evident with the proof of Theorem \ref{T3.3}.  So we begin this section with the notion of local variables and local states and steps.

{\em The local-state variables of process $p_i$ (for $i=0,1$) and their types are the following:} $n_i,v_0,\Val_i$ are of type 
$\mathbb N$ (natural numbers),
and $PC_i$ is of type $\{1,\ldots,5\}$. So registers $R_0$ and $R_1$ are not among
the local state variables. A local state is a function that gives values to the local state variables  in their types.
In the initial local state $S_0$ of $p_i$ we have that $S_0(PC_i)=1$ and $S_0(x)=\bot$ is the undefined value for any other local variable.
 
A non-restricted step by $p_i$ is a pair of $p_i$ local states, and as before we have $(1_0,2_0),\ldots, (4_0,5_0)$ local steps by $p_0$, and
$(1_1,2_1),\ldots,(4_1,5_1)$ local steps by $p_1$. The definition of the read and write local steps however is different from those
of Section \ref{S3}, and as an example we look at $p_0$ local read and write steps.
\begin{enumerate}
\item
A pair of local $p_0$ states $(S,T)$ is denoted ``write-on-register-$R_0$'' (also said to be a local $(2_0,3_0)$ step by $p_0$) when
\begin{enumerate}
\item $S(PC_0)=2$, $T(PC_0)=3$, and
\item all local variables other than $PC_0$ have the same value in $T$ as in $S$.
\end{enumerate}
We define the value of this step to be $S(n_0)$. This is something new in relation to Section \ref{S3}, that steps have values.
Registers do not exist and have no value, but steps do have values in the non-restricted framework. Registers appear in the {\em name} of the step (and this is important, they will serve as predicates)
but they do not record values. 
\item
A pair of $p_0$ local states $(S,T)$ is denoted ``read-of-register-$R_1$'' (also said to be a local $(3_0,4_0)$ step by $p_0$)
when
\begin{enumerate}
\item $S(PC_0)=3$, $T(PC_0)=4$, $T(v_0 )\in \mathbb{N}$,  
\item all local variables other than $PC_0$ and $v_0$ have the same value in $T$ as in $S$.
\end{enumerate}
We define the value of this step to be $T(v_0)$. There is no restriction on the value obtained in a read step (except that it has to be
 in the right type). 

\end{enumerate}

Finally, a non-restricted local
history of $p_i$  is a sequence of local $p_i$ states $S_0,\ldots$, beginning with the initial state $S_0$,
 such that every pair $(S_m,S_{m+1})$ in the sequence is a local step by $p_i$.
A non-restricted history sequence is a good approximation to what we mean here by non-restricted semantics (of the Kishon's Poker algorithm 
of $p_i$),
but the ``real'' definition is given in \ref{Def5.2} in the form of a class of 
system executions whose language $L^i_{NR}$ (for $i=0,1$) is defined first.

\begin{definition} \label{DefLK}
The language $L^i_{NR}$ is a two-sorted language that contains the following features.
\end{definition}
\begin{enumerate}
\item There are two sorts: \Event\ and $\Data$. ($\Data$ has a fixed interpretation as the set ${\mathbb N}\cup\{-1\}$).


\item Unary predicates on \Event\ are: $p_i$, $\Assignmentto n_i$, $\Writeon R_i$, $\Readof R_{1-i}$, and
  $\Return_i$. 

\item
There are two binary 
predicates denoted both
with the symbol $<$, one is the temporal precedence relation on the \Event\ sort, and the other is the ordering relation on the
natural numbers.
\item There is a function symbol $\Val:\Event\to \Data$.
\end{enumerate}
The language contains logical variables with which formulas and sentences can be formed. We reserve lower-case letters such as $a,b,c$
as variables over the \Event\ sort.

We are ready for the definition of non-restricted semantics of Kishon's Poker algorithm.
\begin{definition}
\label{Def5.2}
A system execution $M$ that interprets
$L^i_{NR}$ is said to be a non-restricted 
execution of Kishon's Poker algorithm if it satisfies the properties enumerated in Figure \ref{P1}.
\end{definition}

\begin{figure}
\fbox{
\begin{minipage}[t]{\columnwidth}

\begin{enumerate}

\item[0.] For every event $e$, $p_i(e)$.
There are exactly four events (of $p_i$) and they are linearly ordered by the temporal precedence relation $<$: $a_1<a_2<a_3<a_4$.

\item[1.]
The first $p_i$ event, $a_1$, is an \Assignmentto$n_i$ event. $\Val(a_1)$ is a non-zero natural number.

\item[2.]
The second $p_i$ event, $a_2$, is a \Writeon$R_i$ and $\Val(a_2)=\Val(a_1)$. No other
event is a $\Writeon R_i$ event.

\item[3.] 
The third $p_i$ event, $a_3$, is a \Readof$R_{1-i}$ event. $\Val(a_3)\in {\mathbb{N}}$
is a natural number. No other event is predicated \Readof$R_{1-i}$.

\item[4.] 
The fourth $p_i$ event, $a_4$, is a \Return\ event. As for
the value that $p_i$ returns we have the following:
\begin{enumerate}
\item If $\Val(a_3)=0$ or $\Val(a_3)=\Val(a_1)$ then $\Val(a_4)=0$,
\item If $0<\Val(a_3)<\Val(a_1)$, then $\Val(a_4)=1$,
\item If $\Val(a_3)>\Val(a_1)$, then $\Val(a_4)=-1$.
\end{enumerate}

\end{enumerate}

\end{minipage}
}

\caption{Properties of non-restricted  executions of Kishon's protocol of $p_i$ expressed in the $L^i_{NR}$ language.
We say that $\Val(a_4)$ is the value returned by $p_i$.}
\label{P1}
\end{figure}

In figure  \ref{P1} we collected the little that can be said about non-restricted executions by $p_i$. 
The $p_i$ properties (for $i=0,1$) describe the ways $p_i$ plays the Kishon's game: first an arbitrary nonzero number is picked 
and assigned to $n_i$. This action $a_1$ is predicated by $p_i$ and the $\Assignmentto n_i$ predicates. Then comes event $a_2$ predicated under $p_i$ and 
  \Writeon$R_i$ and whose value is 
equal to the value of the first event. The third event $a_3$ is predicated by $p_i$ and \Readof$R_{1-i}$ and its value (a natural number) is
 unspecified. Finally
the result of the game is calculated, this is $a_4$ the fourth event of $p_i$ whose value depends only on the values of the previous
events by $p_i$.
 
Although these properties of Figure \ref{P1} are written in English, they should formally be expressed in the first-order language
 $L^i_{NR}$ defined above. Only with a formal language  the satisfaction relation that a 
 given structure  satisfies statements in its language is well-defined.

Note in particular item 3 in that list. It says for $i=0$ that
``The third $p_0$ event, $a_3$, is a \Readof$R_1$ event'' and that ``$\Val(a_3)\in {\mathbb{N}}$
is a natural number''. That is, under the minimality condition,  $\Val(a_3)$ is unrelated to any write event,
 and the only requirement is that it has the correct type.
Note also that none of the statements of Figure \ref{P1} relates events of $p_0$ to events of $p_1$.
No concurrency
is involved in the non-restricted specification. For that reason, it is expected that the proof that every non-restricted execution
of Kishon's Poker algorithm satisfies these statements of Figure \ref{P1} would be quite simple. Well, the proof is indeed simple,
but it requires some preliminary definitions which take time, and in order not to delay too much the correctness proof
 we postpone  these preliminary definitions to Section \ref{PNP} and proceed directly to that proof. 
  So, in reading the following section, the reader is asked to rely on intuition and to accept that even if the registers are
		regular, 
the minimal properties of Figure \ref{P1} hold in every execution by $p_i$.
With this assumption we shall prove in the following section the correctness of the Poker algorithm.

 \section{Correctness of Kishon's Poker algorithm for regular registers}
\label{Sec6}

A {\em restricted} semantics is obtained as the conjunction of non-restricted semantics and a specification of the communication
devices. So ``restricted'' means imposing restrictions on the communication devices. In this section the restrictions on the registers
are that they are regular. We shall define in details the restricted semantics and prove Theorem \ref{T3.3} which says essentially that even
with regular registers the correctness condition holds.

 Define the language $L_{NR}$ as the union of the languages $L^0_{NR}$ and $L^1_{NR}$ defined in the previous section and employed in
Figure \ref{P1}.
That is, the symbols of $L_{NR}$ are the symbols of $L^0_{NR}$ and of $L^1_{NR}$. So $L_{NR}$ is a two sorted language, with sorts
\Event\ and \Data. The predicates of $L_{NR}$ are $p_0$, $p_1$ etc. (and the symbols such as $<$ that appear in both languages).

If $M$ is a structure that interprets $L_{NR}$ then the {\em reduct} of $M$ to $p_0$ is the structure $M^0$ defined as follows.
The universe of $M^0$ consists of the set of events $p_0^M$ (that is all the events of $M$ that fall under predicate $p_0$)
and $\Data^{M_0}=\Data^M={\mathbb N}\cup \{-1\}$. All other predicates of $L^0_{NR}$  have the same
interpretation in $M_0$ as in $M$, and the function $\Val^{M_0}$ is the restriction of $\Val^M$ to the set of $p_0$ events of $M$.  In a similar way the restriction of $M$ to $p_1$ is defined. 

 The following is the definition of the class of structures that model executions of the Kishon's Poker algorithm under the assumption
that the registers are regular.

\begin{definition}
\label{D5.6}
A restricted system-execution of  Kishon's Poker algorithm for regular registers is a system execution interpretation of $L_{NR}$, $M$,
 that satisfies the
following two conditions.
\begin{enumerate}
\item Every event in $M$ is either in $p_0$ or in $p_1$. That is, $M\models \forall e(p_0(e)\vee p_1(e))\wedge \neg\exists e(p_0(e)\wedge p_1(e))$. 
\item For every $i=0,1$, $M^i$ satisfies the non-restricted properties of Figure \ref{P1}. ($M^i$ is the reduct of $M$ to $p_i$.)

	\item The two (distinct) registers $R_0$ and $R_1$ are single writer regular registers of $p_0$ and $p_1$ (Definition \ref{D4.1}).
   \end{enumerate}

\end{definition}

Recall (Section  \ref{SReg}) that being a system-execution is already a certain restriction on the structure $M$. For example,
it is required that the 
 $<^M$ relation  satisfies the Russell-Wiener property (equation \ref{RW}).
It follows from this definition that any  restricted system-execution of  Kishon's  algorithm for regular registers
contains eight events: the four $p_0$ event $a_1,\ldots,a_4$ and the four $p_1$ events $b_1,\ldots,b_4$.

We now prove the correctness of the Kishon's Poker algorithm as an absract statement about a class of Tarskian system executions.
 The
 following theorem corresponds to theorem \ref{T1}. 

 \begin{theorem}
 \label{T3.3}
 Assume that $M$ is a system execution of the Kishon's Poker algorithm with regular registers as in Definition \ref{D5.6}. Then
 the following hold in $M$.
 Suppose that $a_1$ and $b_1$ are the Assignment events of $p_0$ and $p_1$ respectively, and $a_4$, $b_4$ are their Return events.
  
	\begin{enumerate}
	\item If $\Val(a_1)<\Val(b_1)$, then $\Val(a_4)<\Val(b_4)$.
	
	\item If $\Val(a_1) > \Val(b_1)$, then $\Val(a_4) > \Val(b_4)$.
	
	\item If $\Val(a_1) = \Val(b_1)$, then $\Val(a_4) = \Val(b_4)=0$.
	\end{enumerate}
	
  \end{theorem}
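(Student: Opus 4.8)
The plan is to replay, inside the structure $M$, the informal regular-register argument of Section \ref{SRR}, drawing the combinatorial facts from Figure \ref{P1} and the value constraints from the regularity clauses of Definition \ref{D4.1}. First I would extract the skeletal facts that hold in $M$. By item 0 of Figure \ref{P1} the $p_0$-events satisfy $a_1<a_2<a_3<a_4$ and the $p_1$-events satisfy $b_1<b_2<b_3<b_4$ (so in particular $a_2<a_3$ and $b_2<b_3$). By item 2, $a_2$ is the \emph{unique} \Writeon$R_0$ event with $\Val(a_2)=\Val(a_1)>0$, and $b_2$ is the unique \Writeon$R_1$ event with $\Val(b_2)=\Val(b_1)>0$. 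By item 3, $a_3$ is the unique \Readof$R_1$ event and $b_3$ the unique \Readof$R_0$ event, each carrying a natural-number value.

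Next I would invoke regularity. Since $R_0$ has exactly one write $a_2$, of value $\Val(a_1)$, and initial value $0$, Definition \ref{D4.1}(3) forces $\Val(b_3)\in\{0,\Val(a_1)\}$; moreover $a_2<b_3$ rules out clause (c) (a write precedes $b_3$) and clause (a) ($a_2$ is not concurrent with $b_3$), leaving only clause (b) with $\Val(b_3)=\Val(a_1)$. Symmetrically $\Val(a_3)\in\{0,\Val(b_1)\}$, and $b_2<a_3$ forces $\Val(a_3)=\Val(b_1)$. The one piece of temporal information still needed is the disjunction $b_2<a_3\vee a_2<b_3$, which is exactly the lemma proved in Section \ref{SRR} (there phrased as $F_2<E_3\vee E_2<F_3$), obtained from the Russell--Wiener property applied to $a_2<a_3$, $\neg(b_2<a_3)$, $b_2<b_3$.

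With these facts in hand the three parts are a short case analysis driven by item 4 of Figure \ref{P1}. For part (1), assume $\Val(a_1)<\Val(b_1)$. Because $\Val(a_3)\in\{0,\Val(b_1)\}$ while $\Val(a_1)\notin\{0,\Val(b_1)\}$, item 4 gives $\Val(a_4)\in\{0,-1\}$, with $\Val(a_4)=0$ holding \emph{only} when $\Val(a_3)=0$; symmetrically $\Val(b_4)\in\{0,1\}$, with $\Val(b_4)=0$ only when $\Val(b_3)=0$. Now apply the disjunction: if $b_2<a_3$ then $\Val(a_3)=\Val(b_1)\neq0$, so $\Val(a_4)=-1<\Val(b_4)$; if instead $a_2<b_3$ then $\Val(b_3)=\Val(a_1)\neq0$, so $\Val(b_4)=1>\Val(a_4)$. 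Either way $\Val(a_4)<\Val(b_4)$. Part (2) is the mirror image under exchanging the indices $0$ and $1$. Part (3), assuming $\Val(a_1)=\Val(b_1)$, needs no temporal information at all: whichever admissible value $\Val(a_3)$ takes, namely $0$ or $\Val(a_1)$, item 4(a) yields $\Val(a_4)=0$, and likewise $\Val(b_4)=0$.

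I expect the argument to be essentially routine bookkeeping once the framework is laid down; the only genuinely load-bearing step is the disjunction $b_2<a_3\vee a_2<b_3$, that is, the Russell--Wiener lemma. This is precisely where the partial-order structure of concurrent events does the work in place of a global interleaving, and reproducing the informal proof step-for-step in this event-based setting is exactly the point the theorem is meant to illustrate.
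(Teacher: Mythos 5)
Your proposal is correct and follows essentially the same route as the paper: the paper packages your two key facts as Lemma \ref{ML} (not both reads return $0$, proved via the Russell--Wiener disjunction $a_2<b_3\vee b_2<a_3$) and Lemma \ref{LM1} (a nonzero read value is forced by regularity to equal the unique write's value, which then pins down the return value via item 4 of Figure \ref{P1}). The only difference is cosmetic --- you case-split on the temporal disjunction directly while the paper splits on whether $\Val(a_3)=0$ --- and you additionally write out parts (2) and (3), which the paper dismisses as symmetric/similar.
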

  \begin{proof} We only prove the first item, since the second has a symmetric proof and the third a similar proof. 
	   The following two lemmas are used for the proof of this theorem. These lemmas rely on the abstract
	 properties of Figure \ref{P1} and on
   the assumed regularity of the registers.

  \begin{lemma}
  \label{ML}
  It is not the case that $\Val(a_3)=\Val(b_3)=0$.
  \end{lemma}
  There are two cases in the proof of the lemma.
  \begin{enumerate}
  \item[Case 1:] $a_2< b_3$. Since $R_0$ is regular and $a_2$ is the only write on $R_0$, and as $b_3$ is a read
		of that register, $\Val(a_2)=\Val(b_3)$.
	But
  $\Val(a_2)=\Val(a_1)>0$ (by items 1 and 2 of Figure \ref{P1}) and hence $\Val(b_3)>0$ in this case as required.

  \item[Case 2:] not Case 1. Then $\neg(a_2<b_3)$. Together with $b_2 < b_3$ and $a_2 < a_3$,
	the Russell-Wiener property implies that 	$b_2 < a_3$.
  Hence $\Val(a_3)=\Val(b_2)$ by the symmetric argument using now the regularity of  register $R_1$. Since $\Val(b_2)>0$, we get
	that $\Val(a_3)>0$ as required.
  \end{enumerate}

 \begin{lemma}
  \label{LM1}
  Suppose that $\Val(a_1)<\Val(b_1)$. 
	\begin{enumerate}
	\item If $\Val(b_3) \not=0$ then 
  \[ \Val(b_4)= 1.\]
 
\item
    Symmetrically, if $\Val(a_3)\not=0$ then
    \[ \Val(a_4)= -1.\]
 \end{enumerate}
    \end{lemma}
 For the proof of this lemma suppose that $\Val(b_3)\not= 0$.
Event $b_3$ is a read of register $R_0$, and there is just one write event on that regular register, namely the write event $a_2$ whose value is the value of the \Assignmentto$n_0$ event $a_1$. The initial value of $R_0$ is $0$. Since
$R_0$ is a regular register, this implies immediately that $\Val(b_3)=0$ or else $\Val(b_3)=\Val(a_2)=\Val(a_1)$. (Indeed,
the value of the read $b_3$ is either the value of some write event, and only  $a_2$ can be that event, or the initial
value of the register.)
Since $\Val(b_3) \not=0$ is assumed,  then \[\Val(b_3)=\Val(a_1)\] follows. The lemma assumes that $\Val(a_1)<\Val(b_1)$,
that is $0<\Val(b_3)<\Val(b_1)$, and hence   $\Val(b_4)= 1$ by property $4(b)$ of $p_1$.
	
The second item of the lemma follows symmetrically, and this ends the proofs of the two lemmas.

 We now conclude the proof of the first item of theorem \ref{T3.3}. Assume that $\Val(a_1)<\Val(b_1)$. We shall prove that 
$\Val(a_4)< \Val(b_4)$. There are two cases.
 \begin{enumerate}
 \item[Case 1] $\Val(a_3)=0$. Then $\Val(a_4)=0$ (by 4(a)). By lemma \ref{ML}, $\Val(b_3)\not= 0$, and hence by lemma \ref{LM1}
 $\Val(b_4)=1$. So $\Val(a_4)<\Val(b_4)$ in this case.

 \item[Case 2] $\Val(a_3)\not=0$. Then $\Val(a_4)=-1$ by lemma \ref{LM1}. Now $\Val(b_4)$ can be $0$, and in this case $\Val(a_4)<\Val(b_4)$ holds.   But if not, if $\Val(b_4) \not = 0$,
then
 $\Val(b_3)\not=0$ by $4_1(a)$. Hence $\Val(b_4)=1$ (again by lemma \ref{LM1}). Thus, in both cases $\Val(a_4)<\Val(b_4)$.

 \end{enumerate}  \end{proof}

\section{Proof of the non-restricted properties}
\label{PNP}

In order to complete the proof of Theorem \ref{T3.3} it remains to prove that the properties of Figure \ref{P1} hold for $p_i$
even when we make no assumptions whatsoever on the registers used. For such a simple protocol with its four instructions that
are executed consecutively, it is evident that every execution generates four actions as described in that figure. It is also evident
that no concurrency is involved in establishing these properties of a single process. However, if we seek a proof  
  that does not rely on our intuitive understanding of the protocol, we need some mathematical framework that relates the code of Figure
\ref{JP}	with the properties of Figure \ref{P1}. That is, we need an explication of the non-restricted semantics with which it is possible to prove that every execution of the code of process $p_i$ results in a structure that satisfies the properties of Figure \ref{P1}.
The development of section \ref{Sub5.1} is not good enough simply because a local history sequence is not a Tarskian structure for which
it is meaningful to say that it satisfies (or not) these properties. In this section we describe a way to present the set of non-restricted executions by process $p_i$ as a set of Tarskian structures.

We first redefine the notion of {\em state} (i.e. a non-restricted state of $p_i$) not as a function that gives values to state variables, but rather as a finite Tarskian
structure. Moreover, that state records not only the attributes of the moment, but actually all the events that led up to that state. That is, a state (in this section) is an extended state structure that includes its own history as well\footnote{In fact, it suffices that the extended state
includes a {\em property} of its history.}. (For earlier work that
tries to  elucidate the notion of state in a similar fashion we refer to \cite{WhatIs} and \cite{OnsystemEx}.)

Let \StateVar\ be some set of {\em state variables} and suppose that every $v\in \StateVar$ has a type $\Type(v)$. Then a state, in
the functional meaning of the word (as is usually defined), is a function $s$ defined over $\StateVar$ such that $s(v)\in \Type(v)$ for every state variable $v$.
Now suppose that $L$ is some first-order language such that every $v\in \StateVar$ is an individual constant in $L$, and its type is 
a sort of $L$ that is
required to be interpreted as $\Type(v)$ in every interpretation of $L$. (Note that $v$ is not a variable of $L$, it is not quantifiable and cannot be a free variable in a formula. It is just a name
of a member of any interpreting structure of $L$.)
Besides these state variables and sorts, $L$ contains other symbols (predicates, function symbols, and constants).
 Let $\calM$ be a class of system execution structures that interpret $L$ and are such that for every
$M\in \calM$ and any
$v\in \StateVar$,  $v^M\in \Type(v)$. The {\em functional state} of $M$, $S=S(M)$, is defined by $S(v)=v^M$ for every
$v\in \StateVar$.
The structures in $\calM$ are said to be {\em extended states.}

The following example will clarify this definition of extended states. 
we define first a logical language $L^0_{K}$ which will be used to define the non-restricted semantics of the Kishon's Poker protocol for
$p_0$. The $L^1_{K}$ language is defined analogously for $p_1$. 
\begin{definition} \label{DefLKK}
The language $L^0_{K}$ is a two-sorted language that contains the following features.
\end{definition}
\begin{enumerate}
\item There are two sorts: \Event\ and $\Data$. ($\Data$ is in this context interpreted as the set ${\mathbb N}\cup\{-1\}$).

\item Individual constants are names of \Data\ values. The following  local variables of $p_0$ are individual constants of $L^0_K$:
 $n_0$, $v_0$, $\Val_0$ and 
$PC_0$. We set $\StateVar= \{n_0,v_0,\Val_0, PC_0\}$. Also, $-1,0,1$ are individual constants with fixed interpretations.

\item Unary predicates on \Event\ are: $p_0$,  $\Assignmentto n_0$, $\Writeon R_0$, $\Readof R_1$, and $\Return_0$. 

\item
There are two binary 
predicates denoted both
with the symbol $<$, one is the temporal precedence relation on the \Event\ sort, and the other is the ordering relation on the
natural numbers.
\item There is a function symbol $\Val:\Event\to \Data$.
\end{enumerate}

Note that   $L^0_{K}$ is richer than the language $L^0_{NR}$ with which the properties of non-restricted executions of $p_0$ were formulated
(in Figure \ref{P1}). In fact, $L^0_K$ is obtained from $L^0_{NR}$ by the additions of the constants in $\StateVar$ and their
sorts.  So any interpreting structure of $L^0_{K}$ is also an interpretation of $L_{NR}$.

Let $\calM$ be the set of all interpretations $M$ of $L_{K_0}$ such that
\begin{enumerate}
\item $\Event^M$ is finite, and $\Data^M={\mathbb N}\cup \{ -1\}$. 
\item  $n_0^M,v_0^M\in {\mathbb N}$ and $PC_0^M\in \{1,2,3,4,5\}$.
\end{enumerate}
Members of $\calM$ are  said to be  extended (non-restricted) states (of $p_0$), and if $M\in \calM$ then the functional
state $S=S(M)$ is defined by $S(x)=x^M$.  

The initial extended state is the structure $M_0\in \calM$ such that $\Event^{M_0}=\emptyset$ (there are no events),
 $PC_0^{M_0} = 1$, the predicates over the events have
empty interpretation (of course) and $x^{M_0}=0$ for any individual constant $x$ (other than $PC_0$).

If $M$ and $N$ are structures for $L^0_K$, then $N$ is said to be an {\em end-extension} of $M$ when $\Event^M$ is an initial
section of $\Event^N$ (in the $<^N$ ordering) and the reduct of $N$ to $\Event^M$ is the structure $M$. 

The $p_0$ extended steps are defined as follows.
\begin{enumerate}
\item[$(1_0,2_0)$ steps] are pairs $(S,T)$ of extended $p_0$ states such that 
\begin{enumerate}
\item $PC_0^S=1_0$, $PC_0^T=2_0$, $n_0^T$ is an arbitrary positive natural number.
\item For some $a_1\not\in \Event^S$ (we say that $a_1$ is a new member) $\Event^T=\Event^S\cup \{a_1\}$, and $S$ is the restriction
of $T$ to $\Event^S\cup {\mathbb N}\cup \{-1\}$.  For every $x\in \Event^S$, $x<^T a_1$. That is, $a_1$ is added after all events
of $S$, and $T$ is an end-extension of $S$.

\item The following hold in $T$. $p_0(a_1)$. $\Assignmentto n_0(a_1)$. $\neg \Writeon R_0(a_1)$. $\neg \Readof R_1(a_1)$. $\neg \Return(a_1)$.
$\Val(a_1)=n_0$.
\end{enumerate}

\item[$(2_0,3_0)$ steps] are pairs $(S,T)$ of extended $p_0$ states such that
\begin{enumerate}
\item $PC_0^S =2_0$, $PC_0^T=3_0$, $n_0^T=n_0^S$.
\item For some new member $a_2$, $\Event^T=\Event^S\cup \{ a_2\}$. $T$ is an end-extension of $S$. 
\item The following holds in $T$. $p_0(a_2)$. $\Writeon R_0(a_2)$. $\neg \Assignmentto n_0(a_2)$.
 $\neg \Readof R_1(a_2)$. $\neg \Return(a_2)$.  $\Val(a_2)= n_0$.

\end{enumerate}

\item[$(3_0,4_0)$ steps] are pairs $(S,T)$ of extended $p_0$ states such that 
\begin{enumerate}
\item $PC_0^S =3_0$, $PC_0^T=4_0$, $n_0^T=n_0^S$, $v_0^T\in {\mathbb N}$ is arbitrary.
\item For some new member $a_3$, $\Event^T=\Event^S\cup \{ a_3\}$. $T$ is an end-extension of $S$. 
\item The following holds in $T$. $p_0(a_3)$.  $\Readof R_1(a_3)$. $\neg \Writeon R_0(a_3)$. $\neg \Assignmentto n_0(a_3)$.
 $\neg \Return(a_3)$.  $\Val(a_3)= v_0$.

\end{enumerate}

\item[$(4_0,5_0)$ steps] are pairs $(S,T)$ of extended $p_0$ states such that

\begin{enumerate}
\item $PC_0^S =4_0$, $PC_0^T=5_0$, $n_0^T=n_0^S$, $v_0^T=v_0^S$. 

\item For some new member $a_4$, $\Event^T=\Event^S\cup \{ a_4\}$. $T$ is an end-extension of $S$. 
\item The following holds in $T$. $p_0(a_4)$.  $\Return(a_4)$.  $\neg \Readof R_1(a_4)$. $\neg \Writeon R_0(a_4)$.
 $\neg \Assignmentto n_0(a_4)$. 
  $\Val_0^T\in \{-1,0,1\}$ is such that $\Val^T(a_4)=\Val^T_0$ and is determined as follows. 
	
	If  $(v^S_0=0\vee v^S_0=n^S_0)$ then  
  $\> \Val^T_0 :=0$  elseif $v^S_0<n^S_0$ then $\Val^T_0 :=1$ else $\Val_0 := -1$.	
	
	Hence the following hold in $T$. If $\Val(a_3)=0$ or $\Val(a_3) =\Val(a_4)$ then $\Val(a_4)=0$.
	If $0< \Val(a_3)<\Val(a_1)$ then $\Val(a_4)= 1$. If $\Val(a_3)>\Val(a_1)$ then $\Val(a_4)=-1$.
\end{enumerate}

\end{enumerate} 

The  $p_1$ steps: 
$(1_1,2_1),\ldots,(4_1,5_1)$ are similarly defined.

A $p_0$ invariant is a (first-order) sentence $\alpha$ in $L_{K_0}$ that holds in the initial extended state $M_0$ and is such that for every
extended step $(S,T)$ of $p_0$, if $S\models \alpha$ then $T\models \alpha$.

Note that if $M$ is any non-restricted extended state such that $PC_i^M = k_i\in\{1,\ldots,4\}$ then there is a state $N$ such that $(M,N)$ is a
$(k,(k+1))$ step. 

\begin{definition}
 A non-restricted extended history sequence of the Kishon's Poker algorithm for $p_i$ ($i=0,1$)  is a sequence of Tarskian non-restricted extended states 
$(M_0,\ldots)$ of $p_i$ such that $M_0$ is the initial state and every pair $(M_i,M_{i+1})$
in the sequence is a step by $p_i$.
\end{definition}
We note that in a maximal non-restricted extended history sequence there are five states.

\begin{definition}
\label{D5.4}
We say that $M$ is a non-restricted execution of the Kishon's protocol for $p_i$ if $M$ is the last state 
in a maximal non-restricted extended history sequence for $p_i$.
\end{definition}
So  if $M$ is a non-restricted execution of the Kishon protocol for $p_i$ then $M\models PC_i=5$.

\begin{theorem}
\label{ThmNR}
If $M$ is  a non-restricted system execution of the Kishon's Poker algorithm, then $M$ satisfies the properties enumerated in
Figure \ref{P1}.
\end{theorem}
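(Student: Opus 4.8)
The plan is to use the fact that, for a fixed $i$, the maximal non-restricted extended history underlying $M$ is forced in its combinatorial shape, and then to read the five items of Figure \ref{P1} directly off the four steps that build it. I fix $i=0$; the case $i=1$ is symmetric.

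First I would determine the shape of the history. By Definition \ref{D5.4}, $M$ is the last state of a maximal history $M_0,\ldots$ for $p_0$, with $M_0$ the initial extended state ($PC_0=1$, no events). From any state with $PC_0=k\in\{1,2,3,4\}$ the only step applicable to it is the $(k_0,(k+1)_0)$ step, because each step definition demands its prescribed value of $PC_0^S$; and the remark preceding Definition \ref{D5.4} guarantees that such a step exists as long as $PC_0\le 4$. Hence the maximal history is exactly $M_0,M_1,M_2,M_3,M_4=M$, the step from $M_{j-1}$ to $M_j$ is the $(j_0,(j+1)_0)$ step, $M\models PC_0=5$, and since each step adjoins one new event as an end-extension, $M$ carries precisely four events $a_1<a_2<a_3<a_4$.

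Next I would prove, by induction on $j$ (equivalently, by exhibiting the corresponding $p_0$ invariant), that $M_j$ contains exactly the events $a_1,\ldots,a_j$, all satisfying $p_0$ and linearly ordered, that each $a_m$ bears exactly the predicate and the $\Val$ assigned to it by its creating step, and that the local constants are propagated correctly, so that $n_0^{M_j}=\Val(a_1)$ once $j\ge 1$ and $v_0^{M_j}=\Val(a_3)$ once $j\ge 3$. The two engines of the inductive step are: (a) the end-extension clause in every step definition --- the reduct of $T$ to $\Event^S$ equals $S$ --- which keeps the predicates, the ordering, and the $\Val$-values of all previously created events untouched; and (b) the explicit equalities $n_0^T=n_0^S$ and $v_0^T=v_0^S$ in the later steps, which freeze $n_0$ and $v_0$ once they are set. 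The new event $a_j$ inherits its single defining predicate and its value straight from the creating step, every other event-predicate being set false there.

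Finally I would read the five items off the terminal state $M$ ($j=4$, $PC_0=5$). Item $0$ is the event count, the linear order, and the uniform $p_0$-predicate from the shape above. Items $1$, $2$, $3$ come from the $(1_0,2_0)$, $(2_0,3_0)$, $(3_0,4_0)$ steps: for instance $\Val(a_2)=n_0^{M}=\Val(a_1)$ since $n_0$ is frozen after step $1$, while the uniqueness clauses (``no other event is \Writeon$R_0$'' and ``no other event is \Readof$R_1$'') hold because only the one designated step ever turns that predicate on, every other step setting it false on its new event and end-extension preserving that. Item $4$ is immediate from the ``hence'' clause of the $(4_0,5_0)$ step, which already recasts the local case split on $v_0^S,n_0^S$ as the case split on $\Val(a_3),\Val(a_1)$ via the identities $\Val(a_3)=v_0^S$ and $\Val(a_1)=n_0^S$ furnished by the induction. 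The main obstacle is not conceptual but the faithful transport of event-predicates and $\Val$-values across the steps --- making the end-extension clause carry the two uniqueness claims of items $2$ and $3$ --- after which the whole statement is a direct unwinding of the four step definitions.
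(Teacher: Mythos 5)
Your proposal is correct and follows essentially the same route as the paper: the paper also fixes the maximal extended history $M_0,\ldots,M_5$ and exhibits a $p_0$ invariant (Figure \ref{P4}) stating, for each value of $PC_0$, exactly which events exist with which predicates and values, then reads Figure \ref{P1} off the final state with $PC_0=5$. Your inductive formulation is the same argument, and your explicit clauses $n_0^{M_j}=\Val(a_1)$ and $v_0^{M_j}=\Val(a_3)$ usefully spell out the bookkeeping that the paper's invariant leaves implicit in its ``not difficult to prove'' step.
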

\begin{proof}
 Let $M$ be a non-restricted system-execution for $p_0$. So there is
a maximal non-restricted extended history sequence for $p_0$, $M_1,\ldots,M_5$ such that  $M=M_5$ is its last state.
We have to find an invariant $\alpha$ such that
\[ \alpha\wedge PC_0=5 \Rightarrow \text{properties 0--4 of Figure \ref{P1}}.\]
Let $\alpha$ be the conjunction of the properties listed in Figure \ref{P4}. 
  
\begin{figure}[h!]
\fbox{
\begin{minipage}[t]{\columnwidth}
All events fall under predicate $p_0$. 
\begin{enumerate}

\item[1.]
If $PC=1$ then sort \Event\ is empty.

\item[2.] If $PC=2$ then there is just one event, $a_1$, which is such that 
 \Assignmentto$n_0(a_1)$  
and $\Val(a_1)>0$ is a natural number.

\item[3.] If $PC=3$, then there are exactly two events. The first is as in item 2 and the second event, $a_2$, is
such that    $\Writeon R_0(a_2)$ and $\Val(a_2)=\Val(a_1)$.

\item[4.] If $PC=4$, then there are exactly three events. The first and second are as in items 2 and 3, and the third event, $a_3$,
is such that  $\Readof R_1(a_3)$ and $\Val(a_3)\in {\mathbb{N}}$ is a natural number.

\item[5.] If $PC=5$ then there are exactly four events. The first three are as in items 1,2,3, the fourth event, $a_4$, is 
such that $\Return_0(a_4)$ and $\Val(a_4)$ satisfies the following:
\begin{enumerate}
\item If $\Val(a_3)=0$ or $\Val(a_3)= \Val(a_1)$ then $\Val(a_4)=0$,
\item If $0<\Val(a_3)<\Val(a_1)$, then $\Val(a_4)=1$,
\item If $\Val(a_3)>\Val(a_1)$, then $\Val(a_4)=-1$.
\end{enumerate}

\end{enumerate}
\end{minipage}
}
\caption{Inductive properties expressed in the $L^0_{K}$ language which serve in the proof of Theorem \ref{ThmNR}.}
\label{P4}
\end{figure}

It is not difficult to prove that
$\alpha$ is an invariant. This  shows that 
$M = M_5\models \alpha$ and since $M_5\models PC_0=5$ it follows immediately that $M_5$ satisfies the properties enumerated in 
Figure \ref{P1}, and hence Theorem \ref{T3.3} applies. \end{proof}

 \section{Discussion}
\label{S5}
 The Kishon's Poker algorithm is a trivial short algorithm for two processes that execute concurrently. It serves here as a platform to
introduce an event-based {\em model theoretic} approach to the problem of proving the correctness of distributed systems
and to
 compare this approach to the {\em standard}
 approach that is based on the notions of global state, history, and invariance. 
 
Invariance is certainly an extremely important key concept, but there are  situations
in which it is natural to argue about  the events and their temporal  interrelations, whereas finding the invariance and its 
proof is quite difficult.
This observation is well-known, as is the well-documented observation that proofs that rely on these events and temporal interrelations
often lead to grave errors. For example Lamport writes in  \cite{Ass}:
\begin{quote}
Most computer scientists find it natural to reason about a concurrent program in terms of its behavior--the sequence of events generated by its execution. Experience has taught us that such reasoning is not reliable; we have seen too many convincing proofs of incorrect algorithms. This has led to assertional proof methods, in which one reasons about the program's state instead of its behavior.
\end{quote}

The approach outlined in this paper to the problem of proving properties of distributed systems is guided by the desire
to preserve the naturalness of the behavioral approach and to combine it with reliable mathematical rigor. This approach is characterized by the following features.
\begin{enumerate}

\item 
The interleaving semantics and its global states and histories are not used. Instead of global states and interleaving of actions by the different processes, only local states and histories are used in
order to express what each process does in a way that is detached from the properties of the communication devices.
Local states of an
individual process are needed in order to define how the code is executed by a process without any commitment to any specification of
the shared communication devices. 
The resulting structures are called ``non-restricted Tarskian system executions'' and the manifold of all of these
structures is called the ``unrestricted system''. These structures are unrestricted in the sense
that the values of inter-process communication objects of one process are not  connected to values of other processes, simply because the other processes and
the common communication objects are not represented in the local states of a process. 
 The properties of these unrestricted
system executions are properties that refer to each of the processes separately, as if that process lives in a world
by itself. Figure  \ref{P1} is an example of such properties. Note that
 no concurrency is involved so far, and only local analysis is involved.

\item The correct specification of the communication devices is formulated (again in some first-order language)
in a way that is not connected to any specific system of programs. For example, regularity is a property of registers
that is not connected to the Kishon's Poker algorithm or any other algorithm that uses them.
\item  If from the
class of non-restricted system executions we take only those
system executions in which the communication devices work properly (i.e. satisfy the communication device specifications)
then we get the system $\cal M$ that represents the manifold of all possible executions of  the algorithm under the required assumptions (such as regularity)
on the
communication devices. See for example Definition \ref{D5.6}.

\item The correctness of the algorithm is expressed wih a certain condition $\tau$, and to prove it one has to prove that any system 
execution $M$ in $\cal M$ satisfies $\tau$. 

\item The proof of correctness is thus composed of two main stages: establishing properties of the non-restricted
system executions (obtained by analyzing serial processes with their local states and histories), and then using
these properties in conjunction with assumed properties of the communication devices in order to prove the correctness condition $\tau$.
This separation of concerns is a main feature of the event-based approach outlined here, and the following slogan expresses this.
\begin{quote} \em The specification  of a distributed system depends on
properties of sequential programs that work in isolation and on generally formulated properties of the communication devices that the processes employ.
\end{quote}
\end{enumerate}

There is an obvious price to pay for such a model-theoretic framework--it requires a certain (minimal) familiarity with basic notions in
logic and model theory. Some may say that this price is a barrier that cannot be accepted for a framework that claims to be
intuitive and natural. But there is plenty of historical evidence to show that what was once considered as difficult
 becomes with time standard material when 
better ways of explaining and presenting complex issues are developed.   

It is interesting to note that that the event-based correctness proof framework, exemplified here with the
Kishon's Poker algorithm, has its origin, to some extent, in Lamport's earlier work \cite{SE} when he presented system-executions
 (and the notion of
higher-level events). This is my reason to continue to use the term system-execution, to acknowledge the
connection with Lamport's earlier articles.  Two important features however were missing from that earlier work which limited its degree of mathematical formality
and restricted its range of applications. Firstly, those system-executions were not full-fledge Tarskian structures 
in the model-theoretic sense (they were not interpretations of a first-order language), and
secondly the notion of local state was missing from these proofs, and thus Lamport's system-executions were not mathematically
connected to the programs that their processes employ: there remained a gap between the software and the system-executions that represent executions of that software.
A  bridge was missing between these two aspects (code and execution), and  perhaps it is 
because of this
gap that the earlier system-executions were forsaken, and Lamport himself concentrated in his later work on the invariant approach.
 The bridge that I describe here using the Kishon's Poker algorithm as an example is quite simple. It relies on the notion of local states and on  non-restricted semantics of single-process protocols. The advantage of non-restricted semantics is the separation of
the two issues: the execution of the code by each of the individual processes, and the specification
of the communication devices. These ideas were first presented in \cite{A} and developed in \cite{abraham} and in some later publications,
and surely much work is still needed in order to transform them into a well-developed useful framework for proving properties of
distributed systems.

\end{document}